\documentclass[conference]{IEEEtran}
\IEEEoverridecommandlockouts

\usepackage{amsmath,amssymb,amsthm,mathtools}
\usepackage{bm}
\usepackage{booktabs}
\usepackage{algorithm}
\usepackage{algorithmic}
\usepackage{graphicx}
\usepackage{cite}
\usepackage{hyperref}
\usepackage{color}
\usepackage{multirow}

\newtheorem{proposition}{Proposition}

\newtheorem{remark}{Remark}

\DeclareMathOperator{\KL}{KL}
\DeclareMathOperator{\SW}{SW}

\DeclareMathOperator{\tr}{tr}

\title{Improving D-Optimal Sensor Placement for\\Bearing-Only Localization via Maximum-Entropy Reweighting}

\author{\IEEEauthorblockN{\bf Raktim Bhattacharya}
\IEEEauthorblockA{Department of Aerospace Engineering\\
Texas A\&M University\\
College Station, TX, USA\\
raktim@tamu.edu}}

\begin{document}
\maketitle

\begin{abstract}
In this paper, we present a two-layer architecture for bearing-only sensor placement that improves upon classical D-optimal design. The first layer reweights particles by minimizing Kullback-Leibler divergence from the current distribution subject to a distributional accuracy bound, concentrating mass on regions where the posterior is likely to settle, without reference to the sensor model. The second layer performs D-optimal sensor placement with respect to the reweighted Fisher information matrix, steering sensors toward geometrically informative configurations. Because the two layers are structurally decoupled, the reweighting generalizes across sensing modalities while the placement remains specific to bearing geometry. Systematic experiments on multi-source localization at two noise levels show that this reweighting reduces localization error on average, with the benefit growing as the sensor-to-source ratio increases and as measurements become more informative. The improvement is established in the first few iterations of the sequential procedure and persists as the posterior concentrates.
\end{abstract}

\begin{IEEEkeywords}
Sensor placement, bearing-only localization, maximum entropy, Fisher information, particle filtering, D-optimal design.
\end{IEEEkeywords}

\section{Introduction}\label{sec:intro}

Bearing-only source localization arises in sonar, radar, wildlife tracking, and environmental monitoring whenever mobile sensors measure azimuths to targets at unknown positions \cite{BarShalom2001,Bishop2010}. The standard approach to sensor placement maximizes a scalar function of the Fisher information matrix (FIM), most commonly the log-determinant, a criterion known as D-optimality \cite{Ucinski2005,Martinez2006}.

D-optimal placement requires a probability distribution over the target state at which the FIM is evaluated. In a particle-filter implementation, this distribution is represented by weighted particles. After resampling, the weights are approximately uniform, so the FIM averages over the full prior support without distinction. In early iterations, when the prior spans a broad domain, this uniform average assigns equal importance to particles far from the true source and particles near it. The resulting placement is optimal in an expected sense over the prior, but it does not preferentially exploit the geometry of regions where the posterior mass will concentrate. The limitation is compounded in multi-source problems, where the joint state space is high-dimensional and the posterior has complex structure.

The present paper addresses this limitation by applying the maximum-entropy likelihood synthesis framework of \cite{Bhattacharya2026} to bearing-only sensor placement. The central idea is to insert a reweighting step before the placement optimization. This reweighting solves a constrained optimization: among all particle weight vectors satisfying a distributional accuracy bound (here, a sliced Wasserstein constraint), find the one closest to the current weights in the Kullback-Leibler sense. The solution concentrates mass on particles near the accuracy target while remaining as close to the prior distribution as the constraint permits. The D-optimal placement then operates on the reweighted FIM, which reflects the posterior geometry more faithfully.

The construction proceeds in two layers, each with a distinct role. Layer~1 determines the information requirement: it specifies \emph{what} the posterior should look like, without reference to \emph{how} measurements will be acquired. Layer~2 determines the sensor configuration: it solves the D-optimal placement problem using the reweighted particles as input. The two layers are structurally decoupled. The noise level, the number of sensors, and the sensor positions do not enter the Layer~1 optimization. Only Layer~2 depends on the sensing modality. This separation means the same reweighting applies regardless of whether sensors measure bearings, ranges, time-of-arrival, or any other observable.

\subsection{Contributions}

This paper makes three contributions.
\begin{enumerate}
\item It instantiates the maximum-entropy framework of \cite{Bhattacharya2026} for multi-source bearing-only localization in two dimensions, combining MaxEnt particle reweighting (Layer~1) with D-optimal sensor placement (Layer~2) in a sequential architecture.
\item It provides systematic experimental evidence across 64 configurations ($M \in \{3,\ldots,10\}$ sources, $R \in \{3,\ldots,10\}$ sensors) at two noise levels, identifying the sensor-to-source ratio $R/M$ and the measurement noise~$\sigma$ as the two factors governing the improvement magnitude.
\item It characterizes both the success and failure regimes of the approach: MaxEnt reweighting helps most when geometric freedom is available ($R > M$) and measurements are informative ($\sigma$ not too large), and it can degrade performance, primarily when sensors are scarce relative to sources ($R < M$), because the reweighting sacrifices angular diversity.
\end{enumerate}

\subsection{Related Work}

Optimal sensor placement for target localization has a substantial literature in the D-optimal design framework~\cite{Pukelsheim2006,Ucinski2005,Martinez2006}. Bishop~\cite{Bishop2010} derives optimality conditions specific to bearing-only sensor-target geometry, building on the early FIM analysis of Aidala~\cite{Aidala1979}. Information-driven alternatives include mutual information maximization~\cite{Krause2008} and Bayesian optimal experimental design~\cite{Chaloner1995,Ryan2016}. These methods typically require forward simulation through the sensor model for each candidate configuration, a cost that grows with the number of sensors and the dimensionality of the state space.

The maximum-entropy framework used here was introduced in \cite{Bhattacharya2026} for one-dimensional estimation under Wasserstein distance, Maximum Mean Discrepancy (MMD)~\cite{Gretton2012}, and moment constraints. The present paper extends it to planar bearing-only geometry with multiple sources and sensors. The constrained KL-minimization underlying Layer~1 is an instance of Csisz\'{a}r's $I$-projection~\cite{Csiszar1975}, applied here with a transport-based concentration constraint.

\section{Problem Formulation}\label{sec:formulation}

\subsection{Bearing-Only Measurement Model}

Consider $M$ static sources at unknown positions $\bm{p}_1, \ldots, \bm{p}_M \in \mathbb{R}^2$ in a rectangular domain $\mathcal{D} = [x_{\min}, x_{\max}] \times [y_{\min}, y_{\max}]$, observed by $R$ mobile sensors at positions $\bm{s}_1, \ldots, \bm{s}_R \in \mathcal{D}$. Sensor $r$ measures the bearing to source $m$ as~\cite{BarShalom2001,Aidala1979}
\begin{equation}
\beta_{r,m} = \operatorname{atan2}\!\bigl(p_{m,y} - s_{r,y},\; p_{m,x} - s_{r,x}\bigr) + \eta_{r,m},
\label{eq:bearing_model}
\end{equation}
where $\eta_{r,m} \sim \mathcal{N}(0, \sigma^2)$ and $\sigma$ is the bearing noise standard deviation, common to all sensor-source pairs. Throughout we assume known data association: each measurement $\beta_{r,m}$ is labelled with the source~$m$ that produced it. Relaxations of this assumption are discussed in Section~\ref{sec:discussion}.

The Gaussian observation model~\eqref{eq:bearing_model} induces the bearing likelihood
\begin{equation}
p(\beta \mid \bm{x}, \bm{s}_r, \sigma) = \frac{1}{\sqrt{2\pi}\,\sigma}\exp\!\left\{-\frac{[\Delta\theta(\beta,\hat{\beta}(\bm{x},\bm{s}_r))]^2}{2\sigma^2}\right\},
\label{eq:bearing_lik}
\end{equation}
where $\hat{\beta}(\bm{x},\bm{s}_r) = \operatorname{atan2}(x_y - s_{r,y},\, x_x - s_{r,x})$ is the predicted bearing and $\Delta\theta(\cdot,\cdot) \in [-\pi,\pi]$ wraps the angular difference. Given $N$ particles $\{\bm{x}_i^{(m)}\}_{i=1}^N$ representing the current posterior for source~$m$, the corresponding log-likelihood (dropping the constant $-\tfrac{1}{2}\log 2\pi\sigma^2$) reduces to
\begin{equation}
\ell(\bm{x}_i^{(m)}; \beta, \bm{s}_r, \sigma) = -\frac{1}{2\sigma^2}\bigl[\Delta\theta\bigl(\beta, \hat{\beta}(\bm{x}_i^{(m)}, \bm{s}_r)\bigr)\bigr]^2.
\label{eq:bearing_loglik}
\end{equation}

\subsection{Fisher Information for Bearing Sensors}

The Fisher information matrix (FIM) for a parametric likelihood $p(\beta \mid \bm{x})$ is
\begin{equation}
\bm{I}(\bm{x}) \;=\; \mathbb{E}_{\beta \mid \bm{x}}\!\left[\nabla_{\bm{x}}\log p(\beta\mid \bm{x})\,\nabla_{\bm{x}}\log p(\beta\mid \bm{x})^{\!\top}\right],
\label{eq:fim_general}
\end{equation}
and under standard regularity it equals $-\mathbb{E}[\nabla^2_{\bm{x}}\log p]$~\cite{Pukelsheim2006,Ucinski2005}. For the Gaussian bearing model~\eqref{eq:bearing_lik}, the score equals $\nabla_{\bm{x}}\ell = \sigma^{-2}\,\Delta\theta\,\nabla_{\bm{x}}\hat{\beta}(\bm{x},\bm{s}_r)$, and direct differentiation of $\hat{\beta}$ gives
\begin{equation}
\nabla_{\bm{x}}\hat{\beta}(\bm{x},\bm{s}_r) \;=\; \frac{1}{\rho_r^2}\begin{bmatrix}-(x_y - s_{r,y})\\ \;\;\;(x_x - s_{r,x})\end{bmatrix} \;=\; \frac{\bm{u}_r^{\perp}}{\rho_r},
\label{eq:score}
\end{equation}
where $\rho_r = \|\bm{x}-\bm{s}_r\|$ and $\bm{u}_r^{\perp}$ is the unit vector perpendicular to the line of sight from $\bm{s}_r$ to $\bm{x}$. Substituting~\eqref{eq:score} into~\eqref{eq:fim_general} with $\mathbb{E}[(\Delta\theta)^2]=\sigma^2$ yields the per-particle, per-sensor FIM
\begin{equation}
\bm{I}_r(\bm{x}) \;=\; \frac{1}{\sigma^2}\,\frac{\bm{u}_r^{\perp}(\bm{u}_r^{\perp})^{\!\top}}{\rho_r^2}.
\label{eq:fim_unit}
\end{equation}
This form makes the angular content of~\eqref{eq:bearing_fim} explicit: the outer product $\bm{u}_r^{\perp}(\bm{u}_r^{\perp})^{\!\top}$ encodes the perpendicular-to-line-of-sight direction (the only direction along which a noisy bearing is informative), while $1/\rho_r^2$ is the inverse-square range scaling.

Averaging~\eqref{eq:fim_unit} over the particle approximation of the posterior with weights $\{w_i^{(m)}\}$, and writing $\bm{\delta}_{r,i} = \bm{x}_i^{(m)} - \bm{s}_r$, $\rho_{r,i} = \|\bm{\delta}_{r,i}\|$, $(\delta x_{r,i}, \delta y_{r,i})$ for the components of $\bm{\delta}_{r,i}$, gives the standard bearing-FIM expression~\cite{Bishop2010,Aidala1979}
\begin{equation}
\bm{F}_r^{(m)} = \sum_{i=1}^N \frac{w_i^{(m)}}{\sigma^2 \, \rho_{r,i}^4}
\begin{bmatrix}
\delta y_{r,i}^2 & -\delta x_{r,i}\,\delta y_{r,i} \\
-\delta x_{r,i}\,\delta y_{r,i} & \delta x_{r,i}^2
\end{bmatrix}.
\label{eq:bearing_fim}
\end{equation}
The $1/\rho_{r,i}^4$ factor in~\eqref{eq:bearing_fim} is the product of the $1/\rho^2$ in~\eqref{eq:fim_unit} and a second $1/\rho^2$ arising because the unnormalized perpendicular vector $[-(x_y - s_{r,y}),(x_x - s_{r,x})]^{\!\top}$ is used in place of the unit $\bm{u}_r^{\perp}$. The total FIM for source~$m$ aggregates contributions from all sensors, $\bm{F}^{(m)} = \sum_{r=1}^R \bm{F}_r^{(m)}$, and the joint FIM across sources is block-diagonal because each source has independent particles and the bearing model couples sources only through the shared sensor positions. We write $\bm{F} = \operatorname{blkdiag}(\bm{F}^{(1)},\ldots,\bm{F}^{(M)})$ for the resulting $2M\times 2M$ block-diagonal matrix.

The FIM~\eqref{eq:bearing_fim} has two features that shape the role of weights. The per-particle contribution decays inverse-square with distance (trace $1/(\sigma^2\rho_{r,i}^2)$, from~\eqref{eq:fim_unit}), so nearby particles dominate; and the rank-one outer-product structure carries information only perpendicular to the line of sight, making the FIM anisotropic. Shifting weight toward geometrically informative particles therefore changes the effective FIM substantially.

The D-optimal placement objective maximizes the log-determinant of the total FIM summed over all sources~\cite{Pukelsheim2006,Ucinski2005}:
\begin{equation}
\max_{\bm{s}_1,\ldots,\bm{s}_R \in \mathcal{D}} \; \sum_{m=1}^M \log\det \bm{F}^{(m)}(\bm{s}_1,\ldots,\bm{s}_R; \bm{w}^{(m)}).
\label{eq:doptimal}
\end{equation}
The map $\bm{w}^{(m)} \mapsto \bm{F}^{(m)}$ in~\eqref{eq:bearing_fim} is linear (each weight $w_i^{(m)}$ scales the rank-one contribution of particle~$i$), while the outer $\log\det(\cdot)$ in~\eqref{eq:doptimal} is concave but nonlinear. This is the only mechanism through which Layer~1 communicates with Layer~2: substituting the reweighted weights $\bm{w}^{(m),\mathrm{opt}}$ for the prior weights produces a different FIM, which in turn produces a different optimal sensor configuration.

\subsection{Two-Layer Architecture}\label{sec:twolayer}

The procedure operates sequentially over $T$ iterations. At each iteration~$t$:

\textbf{Layer 1 (MaxEnt reweighting).} For each source $m$, solve the constrained KL minimization
\begin{equation}
\min_{\bm{w} \in \Delta^N} D_{\KL}(\bm{w} \| \bm{w}^-) \quad \text{s.t.} \quad \SW_2(\hat{\pi}_{\bm{w}}, \pi_m^\star) \le \varepsilon,
\label{eq:layer1}
\end{equation}
where $\Delta^N = \{\bm{w} \in \mathbb{R}^N_{\ge 0} : \sum_i w_i = 1\}$ is the probability simplex, $\bm{w}^-$ are the current particle weights (the prior weights produced by the particle filter), $\hat{\pi}_{\bm{w}} = \sum_i w_i \delta_{\bm{x}_i^{(m)}}$ is the weighted empirical measure supported on the existing particle locations, $\SW_2$ is the sliced Wasserstein distance, $\pi_m^\star$ is a target distribution encoding the desired accuracy, and $\varepsilon > 0$ is the accuracy budget. The \emph{inputs} to Layer~1 are therefore $(\{\bm{x}_i^{(m)}\}, \bm{w}^-, \pi_m^\star, \varepsilon, L)$, where $L$ is the number of projection directions used to evaluate $\SW_2$ (introduced below). Notably, neither the sensor positions $\{\bm{s}_r\}$, the noise level $\sigma$, nor the sensor count $R$ enters~\eqref{eq:layer1}; this is the structural decoupling that allows the same reweighting to be reused across sensing modalities.

\emph{Sliced Wasserstein distance.} For two probability measures $\mu, \nu$ on $\mathbb{R}^d$ and $L$ unit directions $\{\bm{d}_\ell\}_{\ell=1}^L$ drawn uniformly on the sphere $\mathbb{S}^{d-1}$, the squared sliced Wasserstein distance~\cite{Rabin2011,Bonnotte2013} is
\begin{equation}
\SW_2^2(\mu, \nu) = \frac{1}{L}\sum_{\ell=1}^L W_2^2(\mu_\ell, \nu_\ell),
\label{eq:sw2}
\end{equation}
where $\mu_\ell$ and $\nu_\ell$ are the one-dimensional pushforwards under the projection $\bm{x} \mapsto \langle \bm{x}, \bm{d}_\ell \rangle$. Each one-dimensional $W_2$ admits a closed-form solution via quantile matching, sidestepping the linear program required by full $W_2$ on $\mathbb{R}^d$. We prefer $\SW_2$ over alternative distances for three reasons. (i) Unlike KL or $\chi^2$, which are infinite when supports disagree, $\SW_2$ is well defined for empirical measures and for Dirac targets, both of which arise here. (ii) Unlike MMD, which depends on a kernel choice and is unitless, $\SW_2$ has the same units as the state space (length), making $\varepsilon$ directly interpretable. (iii) Under a Dirac target, $\SW_2^2$ becomes \emph{linear} in $\bm{w}$ (shown next), enabling the closed-form Lagrangian solution.

\emph{Choice of target distribution.} We set $\pi_m^\star = \delta_{\hat{\bm{x}}_m}$, a Dirac mass at the current weighted mean $\hat{\bm{x}}_m = \sum_i w_i^- \bm{x}_i^{(m)}$, computed from the current weights and held fixed during the optimization. The mean is the canonical $W_2$-projection of the empirical posterior onto the set of point masses: it is the unique minimizer of $\bm{y}\mapsto W_2^2(\hat\pi_{\bm{w}^-},\delta_{\bm{y}}) = \sum_i w_i^-\|\bm{x}_i^{(m)}-\bm{y}\|^2$. Two natural alternatives compare as follows. The \emph{median} (or geometric median in $\mathbb{R}^2$) is the analogous $W_1$ projection and resists heavy tails, but particle-filter posteriors are concentrated rather than heavy-tailed once measurements are informative; in our experiments the mean and median differ by less than $\varepsilon$ after the first few iterations. The posterior \emph{mode} is also a point mass and yields the same closed-form tilt with $\hat{\bm{x}}_m$ replaced by the mode location, which can give a better target when the posterior is strongly asymmetric. A fitted Gaussian or GMM target is more expressive for multi-modal posteriors, but the projected $W_2$ then involves quantile matching of the weighted empirical measure and is no longer linear in $\bm{w}$; Layer~1 remains a convex KL minimization (solvable numerically), or one can substitute a linear moment-matching surrogate such as $\mathbb{E}_{\bm{w}}[\|\bm{x}-\hat{\bm{x}}_m\|^2] \le \tau^2$ that preserves the closed form. We retain the Dirac-at-mean target here for simplicity and direct interpretability and revisit asymmetric and multi-modal extensions in Section~\ref{sec:discussion}.

Because the Dirac target projects to a point mass on every direction, the one-dimensional transport cost reduces to a weighted second moment about that point:
\begin{equation}
\SW_2^2(\hat{\pi}_{\bm{w}}, \delta_{\hat{\bm{x}}_m}) = \sum_{i=1}^N w_i\, g_i, \quad g_i = \frac{1}{L}\sum_{\ell=1}^L \langle \bm{x}_i^{(m)} - \hat{\bm{x}}_m, \bm{d}_\ell \rangle^2.
\label{eq:gi}
\end{equation}
The constraint $\SW_2^2 \le \varepsilon^2$ is therefore \emph{linear} in $\bm{w}$, which makes the optimization tractable.

\emph{Derivation of the exponential tilt.} Replacing the squared $\SW_2$ in~\eqref{eq:layer1} with its linear form~\eqref{eq:gi}, the problem becomes
\begin{equation}
\begin{aligned}
\min_{\bm{w}}\;\;& \sum_{i=1}^N w_i \log\tfrac{w_i}{w_i^-} \\
\text{s.t.}\;\;& \sum_i w_i g_i \le \varepsilon^2,\;\; \sum_i w_i = 1,\;\; w_i \ge 0,
\end{aligned}
\label{eq:layer1_reduced}
\end{equation}
which is a convex program~\cite{BoydVandenberghe2004}: the objective is strictly convex in $\bm{w}$ on $\Delta^N$ (since $w_i\log(w_i/w_i^-)$ is strictly convex), and the feasible set is the intersection of a polytope with a halfspace. Slater's condition holds (the prior $\bm{w}^-$ is strictly feasible whenever the constraint is not already tight), so KKT is both necessary and sufficient. Introducing multiplier $\lambda \ge 0$ for the accuracy constraint and $\mu \in \mathbb{R}$ for the normalization, the Lagrangian is
\begin{multline}
\mathcal{L}(\bm{w},\lambda,\mu) = \sum_{i=1}^N w_i \log \frac{w_i}{w_i^-} \\
+ \lambda\!\left(\sum_{i=1}^N w_i\, g_i - \varepsilon^2\right) - \mu\!\left(\sum_{i=1}^N w_i - 1\right).
\label{eq:lagrangian}
\end{multline}
Setting $\partial \mathcal{L}/\partial w_i = \log(w_i/w_i^-) + 1 + \lambda g_i - \mu = 0$ yields $w_i = w_i^- e^{\mu-1}\,e^{-\lambda g_i}$, and the simplex constraint $\sum_i w_i = 1$ determines $e^{\mu-1} = 1/Z(\lambda)$. Substituting back:
\begin{equation}
w_i^{\mathrm{opt}}(\lambda) = \frac{w_i^- \exp(-\lambda\, g_i)}{Z(\lambda)}, \quad Z(\lambda) = \sum_{j=1}^N w_j^- \exp(-\lambda\, g_j).
\label{eq:exptilt}
\end{equation}
Equation~\eqref{eq:exptilt} is the exponential tilt of the prior $\bm{w}^-$ by the cost vector $\bm{g}$, and is the standard solution of a constrained minimum-KL problem with a linear constraint (Csisz\'{a}r's $I$-projection onto a half-space)~\cite{Csiszar1975,BoydVandenberghe2004}. The non-negativity constraints $w_i \ge 0$ are automatically satisfied by the exponential form. The multiplier $\lambda$ is determined by complementary slackness, $\lambda\bigl(\sum_i w_i^{\mathrm{opt}}(\lambda)\, g_i - \varepsilon^2\bigr) = 0$. If the prior already satisfies $\sum_i w_i^- g_i \le \varepsilon^2$, the constraint is inactive, $\lambda = 0$, and Layer~1 returns $\bm{w}^{\mathrm{opt}} = \bm{w}^-$ unchanged. Otherwise the constraint is active, $\lambda > 0$, and $\lambda$ is found by bisection on the monotone function $h(\lambda) = \sum_i w_i^{\mathrm{opt}}(\lambda)\, g_i$: differentiating $\log Z$ gives $h(\lambda) = -d\log Z/d\lambda$ and $dh/d\lambda = -\mathrm{Var}_{\bm{w}^{\mathrm{opt}}(\lambda)}(\bm{g}) \le 0$, so $h$ is non-increasing in $\lambda$ (concentrating more weight on small-$g_i$ particles as $\lambda$ grows), and bisection converges geometrically.

Particles far from $\hat{\bm{x}}_m$ (large $g_i$) are down-weighted exponentially, while particles near $\hat{\bm{x}}_m$ retain most of their prior mass. The budget $\varepsilon$ controls the degree of concentration: smaller $\varepsilon$ produces sharper reweighting.

\textbf{Layer 2 (Sensor placement).} Using the reweighted particles $\{(\bm{x}_i^{(m)}, w_i^{\mathrm{opt}})\}$ from all sources, solve~\eqref{eq:doptimal} via multi-start L-BFGS-B with analytically computed gradients. By the matrix identity $\mathrm{d}\log\det \bm{A} = \tr(\bm{A}^{-1}\,\mathrm{d}\bm{A})$, the gradient of the D-optimal objective with respect to sensor position $\bm{s}_r$ decomposes as
\begin{equation}
\nabla_{\bm{s}_r} \sum_{m=1}^M \log\det \bm{F}^{(m)} = \sum_{m=1}^M \tr\!\left([\bm{F}^{(m)}]^{-1} \nabla_{\bm{s}_r} \bm{F}_r^{(m)}\right),
\label{eq:grad_dopt}
\end{equation}
where only $\bm{F}_r^{(m)}$ (the contribution of sensor $r$) depends on $\bm{s}_r$, so the gradient of $\bm{F}^{(m)} = \sum_{r'} \bm{F}_{r'}^{(m)}$ reduces to $\nabla_{\bm{s}_r} \bm{F}_r^{(m)}$. The derivative of~\eqref{eq:bearing_fim} with respect to $\bm{s}_r$ involves differentiating the $1/\rho_{r,i}^4$ range factor and the outer-product bearing terms, both of which have closed-form expressions. These analytical gradients avoid finite-difference approximation and allow efficient optimization.

The objective~\eqref{eq:doptimal} is non-convex in the sensor positions because of the $1/\rho_{r,i}^4$ range dependence and the outer-product bearing structure, so multiple local optima exist. We address this with $K$ random restarts, each initializing sensor positions uniformly in $\mathcal{D}$, and retain the solution with the largest objective value. All $K$ restarts run independently and can be parallelized.

\textbf{Bayesian update.} Bearing measurements are simulated from the true source positions with noise~$\sigma$. Given measurements $\{\beta_{r,m}\}$ from the new sensor positions, particle weights for source $m$ are updated multiplicatively using the prior weights $\bm{w}^-$ (not the Layer~1 weights $\bm{w}^{\mathrm{opt}}$, which serve only to guide placement):
\begin{equation}
\tilde{w}_i^{(m)} \propto w_i^- \prod_{r=1}^R \exp\!\bigl(\ell(\bm{x}_i^{(m)}; \beta_{r,m}, \bm{s}_r, \sigma)\bigr),
\label{eq:weight_update}
\end{equation}
where $\ell$ is the log-likelihood from~\eqref{eq:bearing_loglik}. The product over sensors follows from the conditional independence of bearing measurements given the source position. Weights are then normalized to sum to one. To prevent particle degeneracy, systematic resampling is triggered when the effective sample size
\begin{equation}
\mathrm{ESS} = \frac{1}{\sum_{i=1}^N (w_i^{(m)})^2}
\label{eq:ess}
\end{equation}
drops below $N/2$. After resampling, all weights are reset to $1/N$.

The D-optimal baseline uses the same Layer~2 and Bayesian update but omits Layer~1, evaluating the D-optimal objective with the current weights $\bm{w}^-$ directly (exactly $1/N$ after resampling, non-uniform between resampling events). Algorithm~\ref{alg:twolayer} summarizes the procedure.

\begin{algorithm}[t]
\caption{Two-Layer Sequential Sensor Placement}
\label{alg:twolayer}
\begin{algorithmic}[1]
\REQUIRE Prior particles $\{\bm{x}_i^{(m)}\}_{i=1}^N$ for $m = 1,\ldots,M$; initial sensors $\{\bm{s}_r\}_{r=1}^R$; budget $\varepsilon$; iterations $T$
\FOR{$t = 1, \ldots, T$}
  \STATE \textbf{Layer 1:} For each source $m$, solve~\eqref{eq:layer1} via bisection on $\lambda$ to obtain $\bm{w}^{(m),\mathrm{opt}}$
  \STATE \textbf{Layer 2:} Solve~\eqref{eq:doptimal} with weights $\bm{w}^{(m),\mathrm{opt}}$ via multi-start L-BFGS-B to obtain $\{\bm{s}_r^{\mathrm{new}}\}$
  \STATE Simulate bearings $\beta_{r,m}$ from true sources at $\{\bm{s}_r^{\mathrm{new}}\}$
  \STATE Update particle weights via~\eqref{eq:weight_update}
  \STATE If $\mathrm{ESS}$~\eqref{eq:ess} $< N/2$, resample (systematic resampling)
  \STATE $\bm{s}_r \leftarrow \bm{s}_r^{\mathrm{new}}$ for all $r$
\ENDFOR
\RETURN Final particle distributions and sensor positions
\end{algorithmic}
\end{algorithm}

\begin{remark}[Structural decoupling]
Layer~1 is independent of the sensor model. Neither the noise level $\sigma$, the number of sensors $R$, nor the sensor positions $\bm{s}_r$ enter the MaxEnt optimization~\eqref{eq:layer1}. Layer~1 determines what the posterior should look like; Layer~2 determines how to achieve it. This modularity means the same Layer~1 solution applies regardless of the sensing modality used in Layer~2.
\end{remark}

\begin{remark}[Specialization of Layer~2]
In the general framework of \cite{Bhattacharya2026}, Layer~2 fits a parametric sensor likelihood to approximate the ideal likelihood ratio $L_i^\star = w_i^{\mathrm{opt}} / w_i^-$ recovered from Layer~1, and the Bayesian update uses this fitted likelihood when real measurements are unavailable.
In the bearing-only setting, the sensor model is known and real measurements are available at each iteration. Layer~2 therefore specializes to D-optimal placement~\eqref{eq:doptimal} with the reweighted particles entering the FIM computation, and the Bayesian update uses actual bearing measurements rather than the synthesized likelihood.
\end{remark}

\subsection{Analytical Properties}\label{sec:theory}

Three properties of the construction clarify the performance-improvement mechanism and bound the deviation of the reweighted FIM from the prior FIM.

\begin{proposition}[Uniqueness and recovery of D-optimal]\label{prop:uniqueness}
Problem~\eqref{eq:layer1_reduced} is strictly convex on $\Delta^N$ and admits a unique minimizer $\bm{w}^{\mathrm{opt}}$ given by the exponential tilt~\eqref{eq:exptilt}. Moreover, as the budget $\varepsilon \to \infty$ (or whenever $\sum_i w_i^- g_i \le \varepsilon^2$), $\lambda = 0$ and $\bm{w}^{\mathrm{opt}} = \bm{w}^-$, so the two-layer procedure reduces exactly to classical D-optimal placement on $\bm{w}^-$.
\end{proposition}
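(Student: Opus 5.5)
The plan is to treat \eqref{eq:layer1_reduced} as a finite-dimensional convex program and argue in three steps: existence and uniqueness, identification of the minimizer with the tilt \eqref{eq:exptilt}, and the $\varepsilon\to\infty$ limit.

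\emph{Existence and uniqueness.} The feasible set $\{\bm{w}\in\Delta^N:\sum_i w_i g_i\le\varepsilon^2\}$ is the intersection of the compact simplex with a closed halfspace, so it is compact and convex. It is nonempty whenever $\varepsilon^2\ge\min_i g_i$, which we assume. Some care is needed if some $w_i^-=0$. In that case we restrict to the support of $\bm{w}^-$, since any $\bm{w}$ charging a particle outside that support has infinite KL and is never optimal. On the support, $\phi(w)=w\log(w/w^-)$ is continuous on $[0,1]$ (with $0\log 0=0$) and satisfies $\phi''(w)=1/w>0$ on $(0,1]$. Hence the objective is continuous and strictly convex on this compact convex set. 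Continuity gives existence of a minimizer. Strict convexity gives uniqueness: if $\bm{w}\neq\bm{w}'$ were two minimizers, their midpoint would be feasible and would have strictly smaller objective.

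\emph{Identification with the tilt.} We verify that the stationary point derived in the text satisfies the full KKT system. For $\lambda\ge0$, the tilt $\bm{w}^{\mathrm{opt}}(\lambda)$ in \eqref{eq:exptilt} is strictly positive on the support of $\bm{w}^-$, so the multipliers for $w_i\ge0$ vanish. Stationarity holds by construction with $e^{\mu-1}=1/Z(\lambda)$. It remains to choose $\lambda$ so that primal feasibility and complementary slackness hold, and this splits into two cases.
\begin{itemize}
\item If $h(0)=\sum_i w_i^-g_i\le\varepsilon^2$, we take $\lambda=0$.
\item Otherwise, $h$ is continuous and non-increasing, since $h'=-\mathrm{Var}(\bm{g})\le0$. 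As $\lambda\to\infty$, $h(\lambda)\to\min_{i:\,w_i^->0}g_i$. Provided this limit is $<\varepsilon^2$, the intermediate value theorem gives $\lambda^*>0$ with $h(\lambda^*)=\varepsilon^2$.
\end{itemize}
Because the problem is convex, KKT is sufficient for optimality, and this does not need Slater. Hence the tilt with $\lambda^*$ is a minimizer, and by the uniqueness step it is \emph{the} minimizer.

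\emph{Main obstacle.} The delicate step is the boundary case $\varepsilon^2=\min_i g_i$ (over the support), where $h(\lambda)$ never attains $\varepsilon^2$ at any finite $\lambda$. There the minimizer is the limit $\lambda\to\infty$, namely $\bm{w}^-$ renormalized on $\arg\min g_i$. We would state this case as a degenerate limit of \eqref{eq:exptilt} and otherwise assume $\varepsilon^2>\min g_i$.

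\emph{Recovery of D-optimal.} If $\sum_iw_i^-g_i\le\varepsilon^2$, then $\bm{w}^-$ is feasible and attains the global lower bound $D_{\KL}=0$. By uniqueness, $\bm{w}^{\mathrm{opt}}=\bm{w}^-$, which corresponds to $\lambda=0$. Since $\sum_iw_i^-g_i$ is a finite number independent of $\varepsilon$, this condition holds for all sufficiently large $\varepsilon$, which covers the limit $\varepsilon\to\infty$. Layer~2 then evaluates \eqref{eq:doptimal} with exactly $\bm{w}^-$, because the FIM \eqref{eq:bearing_fim} depends on $\bm{w}$ only through the weights. The Bayesian update \eqref{eq:weight_update} already uses $\bm{w}^-$. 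Therefore every iteration coincides with the D-optimal baseline.
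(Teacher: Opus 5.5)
Your proof is correct. It shares the paper's skeleton: strict convexity of the KL objective gives uniqueness, the KKT system identifies the tilt, and $\lambda=0$ when the prior is already feasible. It is organized differently, though, and it is more complete.

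The paper's two-line proof leans on the derivation in Section~\ref{sec:twolayer}, where optimality of \eqref{eq:exptilt} rests on an appeal to Slater's condition. You instead get existence from compactness, use only the sufficiency of KKT (which needs no constraint qualification), and construct the active multiplier $\lambda^*$ directly by the intermediate value theorem applied to $h$.

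This buys something concrete. The paper's Slater justification (the prior is strictly feasible ``whenever the constraint is not already tight'') only covers the inactive case, because in the active case $\bm{w}^-$ is infeasible. A strictly positive feasible point exists in the active case exactly when $\varepsilon^2>\min_{i:\,w_i^->0}g_i$. That is the hypothesis your argument makes explicit. Without it, the claim that the minimizer is ``given by'' \eqref{eq:exptilt} at a finite $\lambda$ fails: below the threshold there is no feasible point of finite KL, and at equality the active case has no finite multiplier, leaving only your $\lambda\to\infty$ limit.

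For the recovery claim you use Gibbs' inequality plus uniqueness instead of checking KKT at $\lambda=0$. Both are valid, and yours is more elementary.

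One small tightening is needed for ``every iteration coincides'' at a single fixed $\varepsilon$. After the first iteration, $\bm{w}^-$ depends on earlier placements, so the threshold is not obviously independent of $\varepsilon$. Note that the $\bm{d}_\ell$ are unit vectors and that the particles (which resampling only duplicates) and $\hat{\bm{x}}_m$ stay in $\mathcal{D}$. Hence $g_i\le\|\bm{x}_i^{(m)}-\hat{\bm{x}}_m\|^2\le\operatorname{diam}(\mathcal{D})^2$. Any $\varepsilon\ge\operatorname{diam}(\mathcal{D})$ therefore keeps Layer~1 inactive at every iteration, and the coincidence with the baseline follows by induction on $t$.
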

\begin{proof}
Strict convexity of $\sum_i w_i \log(w_i/w_i^-)$ on the open simplex and linearity of the constraints imply uniqueness. When the prior is already feasible, KKT is satisfied at $\lambda = 0$, which gives $w_i^{\mathrm{opt}} = w_i^-$ by~\eqref{eq:exptilt}.
\end{proof}

\begin{proposition}[Monotone concentration]\label{prop:concentration}
The reweighted measure is strictly closer to the target under the $\SW_2$ pseudo-metric induced by the projection set $\{\bm{d}_\ell\}$ than the prior, whenever the constraint is active:
\begin{equation*}
\SW_2^2(\hat\pi_{\bm{w}^{\mathrm{opt}}},\delta_{\hat{\bm{x}}_m}) \;=\; \varepsilon^2 \;<\; \SW_2^2(\hat\pi_{\bm{w}^-},\delta_{\hat{\bm{x}}_m}).
\end{equation*}
\end{proposition}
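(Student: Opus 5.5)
The plan is to combine the linear representation \eqref{eq:gi} of the squared sliced Wasserstein distance to a Dirac target with the KKT characterization from Proposition~\ref{prop:uniqueness}. By \eqref{eq:gi}, both sides of the claim are values of the linear functional $h(\lambda)=\sum_i w_i^{\mathrm{opt}}(\lambda)\,g_i$. The right-hand side is $h(0)=\sum_i w_i^- g_i$, since $\bm{w}^{\mathrm{opt}}(0)=\bm{w}^-$. The left-hand side is $h(\lambda^\star)$, where $\lambda^\star$ is the multiplier selected by complementary slackness. The statement therefore reduces to two facts: (a) $h(\lambda^\star)=\varepsilon^2$ when the constraint is active, and (b) $\varepsilon^2<h(0)$.

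\textbf{Step (b).} I would take ``the constraint is active'' to mean that the prior is infeasible, i.e.\ $h(0)=\sum_i w_i^- g_i>\varepsilon^2$. This is exactly the case excluded by the second part of Proposition~\ref{prop:uniqueness}. Under this reading the strict inequality holds immediately.

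\textbf{Step (a).} Since the prior is infeasible, $\lambda^\star=0$ cannot satisfy KKT, because the primal constraint $h(0)\le\varepsilon^2$ would fail. Hence $\lambda^\star>0$, and complementary slackness $\lambda^\star\bigl(h(\lambda^\star)-\varepsilon^2\bigr)=0$ forces $h(\lambda^\star)=\varepsilon^2$.

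\textbf{Main obstacle: existence of $\lambda^\star$.} The KKT conditions presuppose that some finite $\lambda^\star>0$ attains the level $\varepsilon^2$. I would use the monotonicity already established, $h'(\lambda)=-\mathrm{Var}_{\bm{w}^{\mathrm{opt}}(\lambda)}(\bm{g})\le 0$, together with continuity of $h$. The limit is $h(\lambda)\to g_{\min}=\min\{g_i : w_i^->0\}$ as $\lambda\to\infty$. By the intermediate value theorem, a finite root exists provided $g_{\min}<\varepsilon^2<h(0)$. Feasibility of \eqref{eq:layer1_reduced}, which is implicitly assumed for the minimizer in Proposition~\ref{prop:uniqueness} to exist, gives $g_{\min}\le\varepsilon^2$.

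The boundary case $g_{\min}=\varepsilon^2$ needs separate treatment. There the minimizer is supported on $\arg\min g$ and is reached only in the limit $\lambda\to\infty$. I would handle it by noting that the constraint value still equals $\varepsilon^2$ at that limiting minimizer, so the equality in the statement persists. Alternatively, one can assume strict feasibility, which is the Slater condition invoked earlier in the paper.

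Strictness of $h$ is not needed for the claim itself. It follows anyway: if the variance of $\bm{g}$ vanished, $\bm{g}$ would be constant on the support of $\bm{w}^-$, and then $h$ could not move from $h(0)>\varepsilon^2$ down to $\varepsilon^2$.
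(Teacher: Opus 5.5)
Your proof is correct and follows the paper's route. In both, \eqref{eq:gi} identifies each side as a value of $\sum_i w_i g_i$, complementary slackness with $\lambda^\star>0$ forces $\SW_2^2(\hat\pi_{\bm{w}^{\mathrm{opt}}},\delta_{\hat{\bm{x}}_m})=\varepsilon^2$, and reading ``active'' as $\sum_i w_i^- g_i>\varepsilon^2$ gives the strict inequality; your intermediate-value argument for a finite multiplier, including the boundary case $g_{\min}=\varepsilon^2$, fills in a point the paper's proof takes for granted.
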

\begin{proof}
By complementary slackness, an active constraint forces $\sum_i w_i^{\mathrm{opt}} g_i = \varepsilon^2$, and by assumption $\sum_i w_i^- g_i > \varepsilon^2$. The two equalities are exactly the $\SW_2^2$ values in~\eqref{eq:gi}.
\end{proof}

\begin{proposition}[Sensitivity of the D-optimal objective]\label{prop:sensitivity}
Let $\Phi(\bm{w}) = \log\det \bm{F}^{(m)}(\bm{w})$ with $\bm{F}^{(m)}$ given by~\eqref{eq:bearing_fim}. If both $\bm{F}^{(m)}(\bm{w}^-)$ and $\bm{F}^{(m)}(\bm{w}^{\mathrm{opt}})$ have smallest eigenvalue at least $\lambda_{\min}>0$, then
$|\Phi(\bm{w}^{\mathrm{opt}}) - \Phi(\bm{w}^-)| \le (R K_m / \lambda_{\min}\sigma^2)\,\|\bm{w}^{\mathrm{opt}} - \bm{w}^-\|_1$, where $K_m = \max_{i,r} 1/\rho_{r,i}^2$.
\end{proposition}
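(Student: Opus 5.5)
Since $\bm{w}\mapsto \bm{F}^{(m)}(\bm{w})$ is linear by~\eqref{eq:bearing_fim}, I would write
\begin{equation*}
\bm{F}^{(m)}(\bm{w})=\sum_{i=1}^N w_i\,\bm{A}_i,\qquad \bm{A}_i=\frac{1}{\sigma^2}\sum_{r=1}^R \frac{\bm{u}_{r,i}^{\perp}(\bm{u}_{r,i}^{\perp})^{\!\top}}{\rho_{r,i}^2},
\end{equation*}
using the unit-vector form~\eqref{eq:fim_unit}. Each $\bm{A}_i$ is positive semidefinite, and I would bound its trace norm (the sum of its eigenvalues) by its trace: $\|\bm{A}_i\|_* = \tr\bm{A}_i = \sigma^{-2}\sum_r \rho_{r,i}^{-2} \le R K_m/\sigma^2$, where $K_m=\max_{i,r}1/\rho_{r,i}^2$. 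Setting $\bm{\Delta}=\bm{F}^{(m)}(\bm{w}^{\mathrm{opt}})-\bm{F}^{(m)}(\bm{w}^-)=\sum_i (w_i^{\mathrm{opt}}-w_i^-)\bm{A}_i$, the triangle inequality then gives $\|\bm{\Delta}\|_*\le (RK_m/\sigma^2)\,\|\bm{w}^{\mathrm{opt}}-\bm{w}^-\|_1$.

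Next I would compare the log-determinants by a mean-value argument along the segment $\bm{w}(t)=\bm{w}^-+t(\bm{w}^{\mathrm{opt}}-\bm{w}^-)$, $t\in[0,1]$. Because $\bm{F}$ is linear in $\bm{w}$, we have $\bm{F}(\bm{w}(t))=(1-t)\bm{F}(\bm{w}^-)+t\bm{F}(\bm{w}^{\mathrm{opt}})$. The smallest eigenvalue is concave on symmetric matrices, so $\lambda_{\min}(\bm{F}(\bm{w}(t)))\ge \lambda_{\min}>0$ along the whole segment. This is the only place the hypothesis on both endpoints is used. Using $\mathrm{d}\log\det\bm{A}=\tr(\bm{A}^{-1}\mathrm{d}\bm{A})$, the same identity behind~\eqref{eq:grad_dopt}, I get
\begin{equation*}
\Phi(\bm{w}^{\mathrm{opt}})-\Phi(\bm{w}^-)=\int_0^1 \tr\!\bigl(\bm{F}(\bm{w}(t))^{-1}\bm{\Delta}\bigr)\,dt .
\end{equation*}

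For the final step I would apply the Hölder-type inequality $|\tr(\bm{B}\bm{\Delta})|\le \|\bm{B}\|_{\mathrm{op}}\|\bm{\Delta}\|_*$ with $\bm{B}=\bm{F}(\bm{w}(t))^{-1}$, whose operator norm is at most $1/\lambda_{\min}$. Combining this with the trace-norm bound from the first paragraph yields
\begin{equation*}
|\Phi(\bm{w}^{\mathrm{opt}})-\Phi(\bm{w}^-)|\le \frac{RK_m}{\lambda_{\min}\sigma^2}\,\|\bm{w}^{\mathrm{opt}}-\bm{w}^-\|_1,
\end{equation*}
which is the claimed inequality.

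The main obstacle I expect is getting exactly the stated constant. A Frobenius- or operator-norm bound on $\bm{\Delta}$ would lose a dimension factor of $\sqrt2$ or $2$. Using the trace norm on the $\bm{A}_i$, together with the fact that each is positive semidefinite with trace $\sigma^{-2}\sum_r\rho_{r,i}^{-2}$, avoids that loss. I would also confirm that the unit-vector form~\eqref{eq:fim_unit} and the $\rho^{-4}$ form~\eqref{eq:bearing_fim} agree, so that the trace really is $1/(\sigma^2\rho_{r,i}^2)$ and not $1/(\sigma^2\rho_{r,i}^4)$ times an unnormalized length.
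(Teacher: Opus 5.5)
Your proof is correct and follows essentially the same route as the paper: the integral mean-value form of $\log\det$ combined with the trace--nuclear-norm (H\"older) inequality, plus the decomposition $\bm{F}^{(m)}(\bm{w})=\sum_i w_i\bm{A}_i$ with each $\bm{A}_i$ PSD and of trace at most $RK_m/\sigma^2$. You also make explicit a step the paper leaves implicit: every point on the segment is a convex combination of two matrices $\succeq\lambda_{\min}\bm{I}$, so the eigenvalue bound holds along the whole path of integration.
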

\begin{proof}
The mean-value form $\log\det\bm{A}-\log\det\bm{B}=\int_0^1\tr((t\bm{A}+(1-t)\bm{B})^{-1}(\bm{A}-\bm{B}))dt$ and the trace--nuclear-norm inequality give $|\log\det\bm{A}-\log\det\bm{B}|\le\lambda_{\min}^{-1}\|\bm{A}-\bm{B}\|_*$. Writing $\bm{F}^{(m)}(\bm{w})=\sum_i w_i\bm{M}_i$ with $\bm{M}_i=\sum_r \bm{F}_{r,i}^{(m)}$ PSD and $\|\bm{M}_i\|_*\le RK_m/\sigma^2$ gives $\|\Delta\bm{F}^{(m)}\|_*\le (RK_m/\sigma^2)\|\bm{w}^{\mathrm{opt}}-\bm{w}^-\|_1$.
\end{proof}

\begin{remark}[Bisection convergence]\label{rem:bisection}
The function $h(\lambda) = \sum_i w_i^{\mathrm{opt}}(\lambda)\,g_i$ equals $-d\log Z/d\lambda$ from~\eqref{eq:exptilt}; differentiating again gives the cumulant identity $h'(\lambda) = -\mathrm{Var}_{\bm{w}^{\mathrm{opt}}(\lambda)}(\bm{g}) \le 0$. Bisection on $h(\lambda)=\varepsilon^2$ therefore converges geometrically.
\end{remark}

Proposition~\ref{prop:uniqueness} shows that the method strictly generalizes D-optimal placement; Proposition~\ref{prop:concentration} confirms that Layer~1 meets its declared accuracy budget; and Proposition~\ref{prop:sensitivity} makes precise the trade-off controlled by $\varepsilon$: large $\varepsilon$ keeps $\|\bm{w}^{\mathrm{opt}}-\bm{w}^-\|_1$ small and the two placements close, small $\varepsilon$ produces aggressive reweighting and a larger placement deviation.

\section{Benefits of MaxEnt Reweighting}\label{sec:mechanism}

The mechanism is best understood by examining the FIM structure~\eqref{eq:bearing_fim}. Under uniform weighting ($w_i = 1/N$), the FIM averages contributions from all particles equally. In early iterations, when particles span a broad prior over $\mathcal{D}$, distant particles contribute information about bearing geometry at locations far from the true source, diluting the FIM's sensitivity to the geometry that actually matters for posterior concentration.

The MaxEnt reweighting shifts mass toward the accuracy target, subject to the budget constraint. Particles near the current best estimate receive higher weight; particles in the tails are exponentially down-weighted. The resulting FIM emphasizes the bearing geometry in the region where the posterior is concentrating, producing sensor placements that resolve the directional ambiguities most relevant to the current state of knowledge.

Two factors govern the magnitude of the benefit.

\emph{Noise level.} At lower noise ($\sigma = 8^\circ$), each bearing measurement carries substantial geometric information about the source location, so a better-chosen placement produces a noticeably tighter posterior update at every step. The MaxEnt placement advantage established when the posterior is still diffuse (the first few iterations) therefore compounds rapidly over the horizon. At higher noise ($\sigma = 15^\circ$), per-measurement information is reduced; the same placement advantage produces a smaller per-step posterior contraction, and the cumulative improvement is correspondingly smaller.

\emph{Sensor-to-source ratio.} When $R$ comfortably exceeds $M$, the placement problem has geometric degrees of freedom that can be allocated to different posterior regions. The reweighting exploits this freedom by steering sensors toward the most uncertain directions of the concentrated posterior. When $R$ is small relative to $M$ (e.g., $R = 3$ with $M \ge 5$), the geometric conditioning is poor and both methods are tightly constrained by the same angular requirements, leaving little room for the reweighting to improve placement.

\section{Experiments}\label{sec:experiments}

\subsection{Setup}

The experiments place $M \in \{3, 4, \ldots, 10\}$ sources at fixed, well-separated positions in $\mathcal{D} = [0, 20]^2$ and deploy $R \in \{3, 4, \ldots, 10\}$ sensors. Each of the 64 configurations runs for $T = 10$ sequential iterations with $N = 1000$ particles per source, sliced Wasserstein budget $\varepsilon = 0.5$, $L = 50$ random projection directions, and $K = 3$ L-BFGS-B restarts in Layer~2. The performance metric is the root-mean-square error (RMSE) of the weighted particle means relative to the true source positions, averaged over all sources and over 20 Monte Carlo seeds.

Two noise levels bracket the regime of interest:
\begin{itemize}
\item $\sigma = 15^\circ$ (moderate noise): posterior uncertainty is large and bearing geometry offers limited leverage.
\item $\sigma = 8^\circ$ (low noise): geometric effects are amplified, making the placement algorithm more consequential.
\end{itemize}

\subsection{Moderate Noise ($\sigma = 15^\circ$)}\label{sec:results15}

At $\sigma = 15^\circ$, the MaxEnt method reduces RMSE over the D-optimal baseline by a mean of 7.9\% across all $(M, R)$ configurations, with individual improvements ranging from $-21\%$ to $+25\%$. Table~\ref{tab:fixed_results} reports selected entries.

Fig.~\ref{fig:combined_15} displays RMSE versus $R$ for selected source counts alongside the full improvement heatmap over all $(M, R)$ pairs. The heatmap confirms that the advantage concentrates in configurations where $R > M$, i.e., where geometric freedom is available. At low $R/M$ ratios, MaxEnt occasionally degrades performance because the reweighting concentrates the FIM on a posterior subset that does not span the angular diversity needed when sensors are scarce.

\begin{figure}[t]
\centering
\includegraphics[width=\columnwidth]{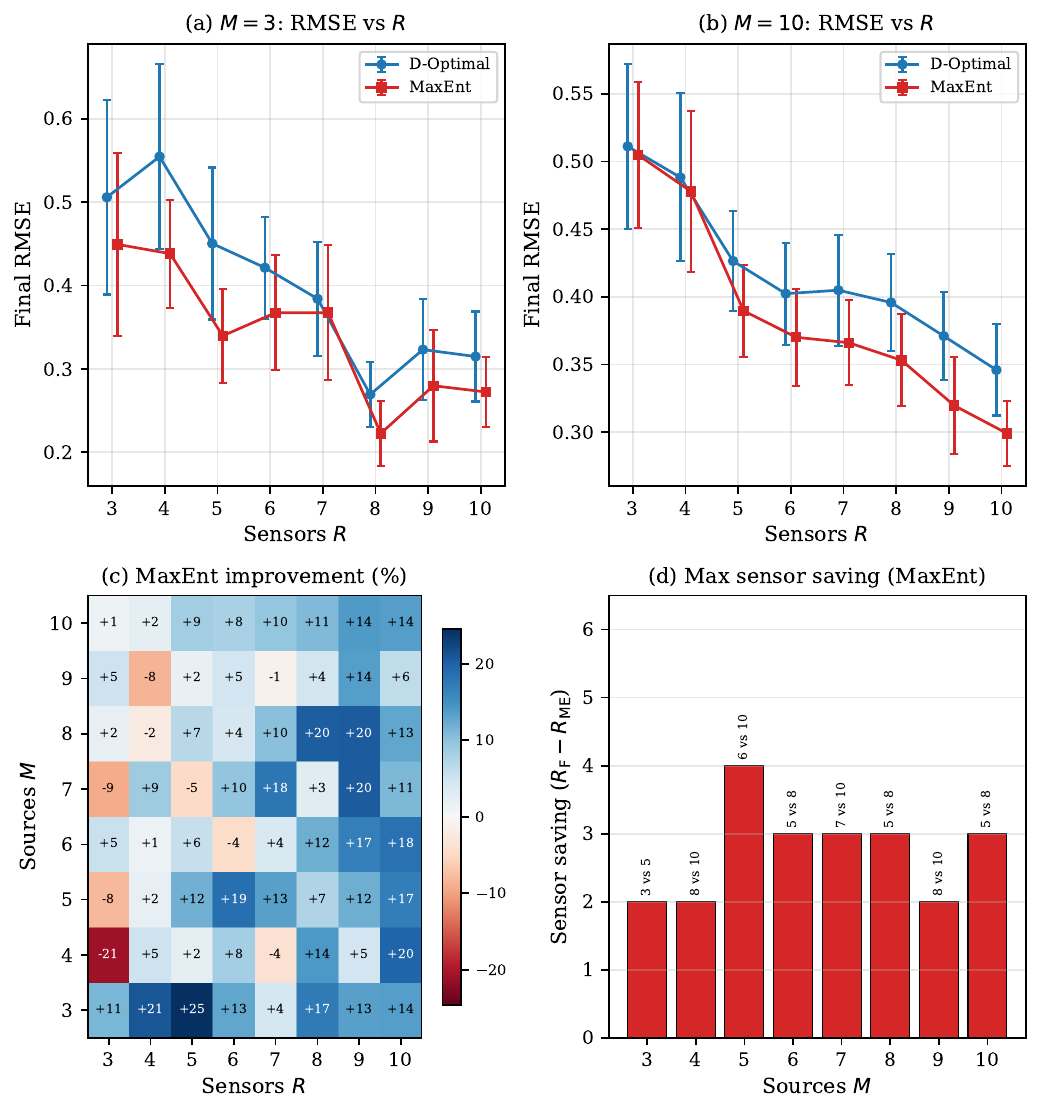}
\caption{Comparison at $\sigma = 15^\circ$. (a),(b)~RMSE versus sensor count for $M = 3$ and $M = 10$. (c)~Improvement heatmap over the full $(M, R)$ grid. (d)~Maximum sensor saving: the largest $R_f - R_m$ such that MaxEnt with $R_m$ sensors achieves RMSE $\le$ D-optimal with $R_f$ sensors.}
\label{fig:combined_15}
\end{figure}

Fig.~\ref{fig:convergence_15} shows the RMSE convergence over the 10 sequential iterations. Both methods improve rapidly during the first 3--4 iterations as the posterior concentrates. The gap between the two methods emerges in the initial iterations and persists thereafter.

\begin{figure}[t]
\centering
\includegraphics[width=\columnwidth]{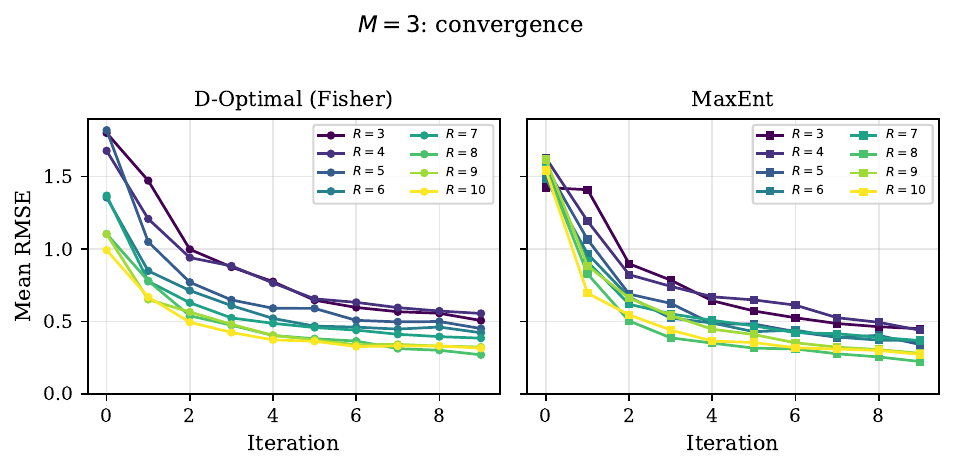}
\caption{RMSE convergence over 10 iterations at $\sigma = 15^\circ$ for selected $(M, R)$ pairs. Shaded bands indicate $\pm 1$ standard deviation over 20 Monte Carlo trials.}
\label{fig:convergence_15}
\end{figure}

\subsection{Low Noise ($\sigma = 8^\circ$)}\label{sec:results8}

At $\sigma = 8^\circ$, the mean improvement rises to 17.2\%, with a range of $-7\%$ to $+34\%$. The lower noise amplifies the MaxEnt advantage because each measurement contracts the posterior more strongly, so the early placement advantage compounds over the horizon.

Fig.~\ref{fig:combined_8} presents the RMSE curves and improvement heatmap at $\sigma = 8^\circ$. The advantage extends more broadly across the $(M, R)$ grid than at $\sigma = 15^\circ$. MaxEnt outperforms D-optimal in nearly all configurations with $R \ge 5$, and the improvement exceeds 20\% across much of the upper-right region of the heatmap.

\begin{figure}[t]
\centering
\includegraphics[width=\columnwidth]{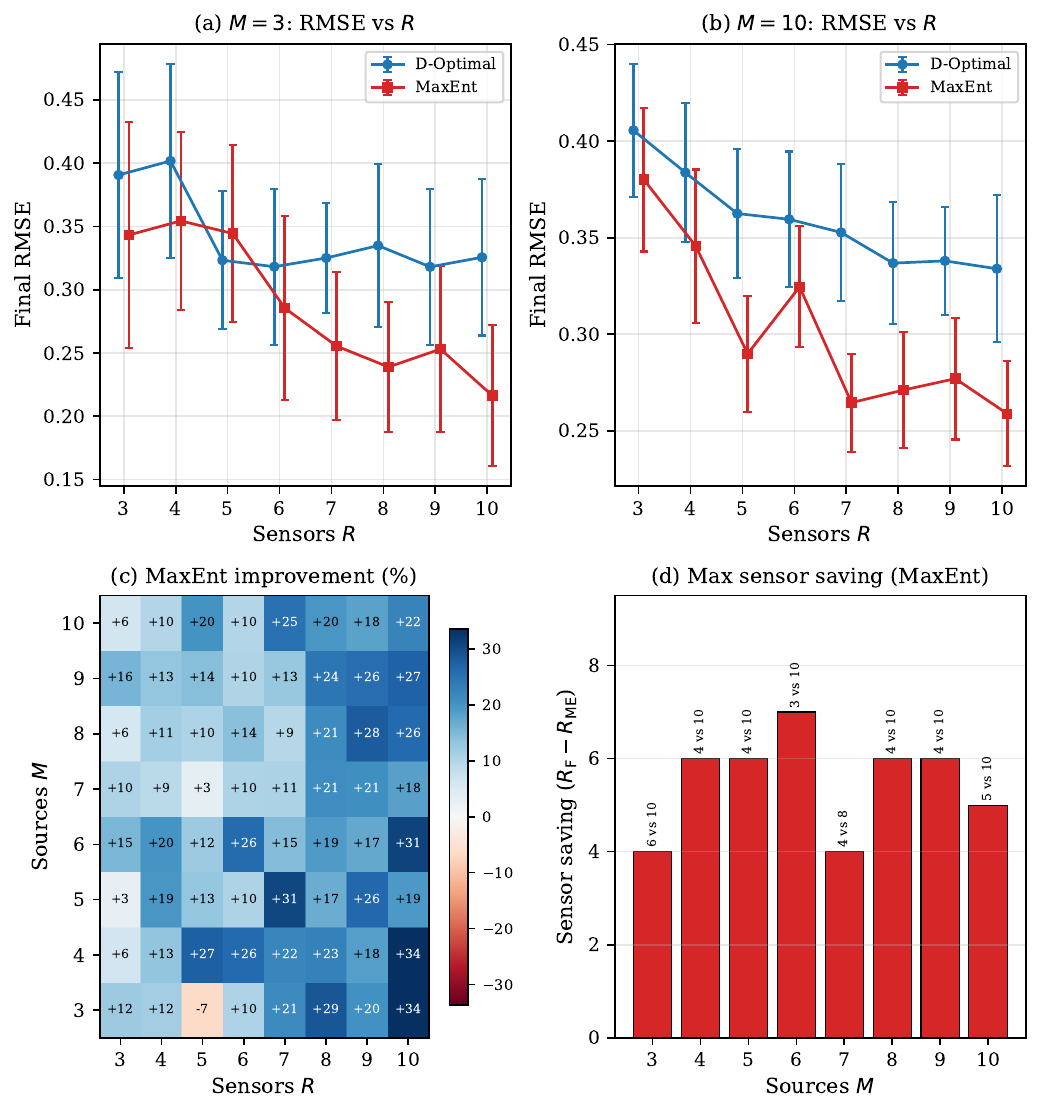}
\caption{Comparison at $\sigma = 8^\circ$. Same layout as Fig.~\ref{fig:combined_15}. The MaxEnt advantage is broader and larger; the sensor saving in panel~(d) reaches 4--7 sensors across all source counts.}
\label{fig:combined_8}
\end{figure}

Fig.~\ref{fig:convergence_8} displays the convergence behavior at $\sigma = 8^\circ$, which mirrors the moderate-noise case but with a larger and more consistent gap.

\begin{figure}[t]
\centering
\includegraphics[width=\columnwidth]{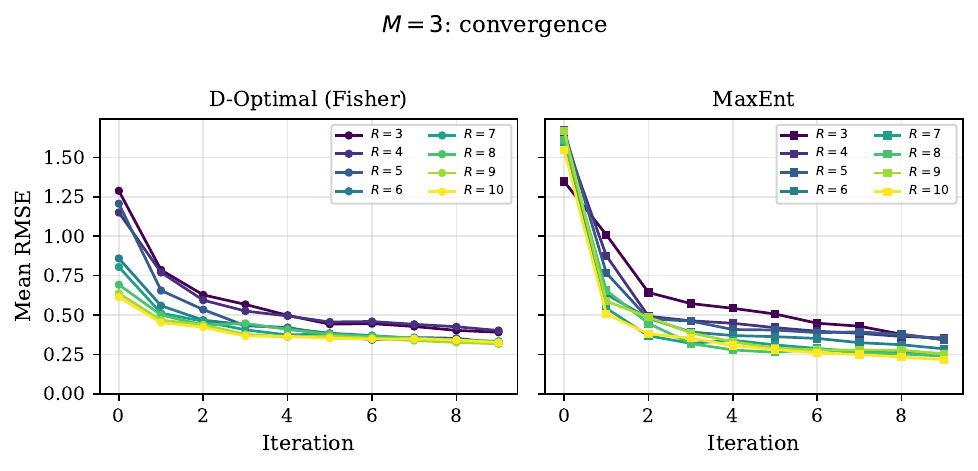}
\caption{RMSE convergence over 10 iterations at $\sigma = 8^\circ$. The MaxEnt advantage is larger and more consistent than at $\sigma = 15^\circ$.}
\label{fig:convergence_8}
\end{figure}

Fig.~\ref{fig:filmstrip} illustrates the spatial evolution of the posterior for a representative configuration ($M = 4$ sources, $R = 10$ sensors) at $\sigma = 8^\circ$. The top row shows D-optimal placement, the bottom row shows MaxEnt. Each column corresponds to a time step. Red circles mark the true source locations; blue circles show the posterior mean estimates, with marker size proportional to the inverse posterior variance (larger markers indicate higher confidence); orange triangles indicate sensor positions. At iteration 0, the prior is uniform over the domain and sensors are initialized near the origin. As measurements accumulate, both methods concentrate the posterior around the true sources, but the trajectories diverge. D-optimal placement distributes sensors to hedge against uncertainty over the full prior support, producing a roughly circular arrangement that provides angular diversity but not targeted refinement. MaxEnt reweighting concentrates mass on particles near the current estimates, steering sensors toward geometrically informative configurations for those regions. By iteration 10, the MaxEnt estimates (bottom right) are visibly closer to the true sources, with final RMSE of 0.06 compared to 0.20 for D-optimal. This example lies in the favorable regime where $R > M$ and the noise is low enough for bearing geometry to discriminate effectively.

\begin{figure}[t]
\centering
\includegraphics[width=\columnwidth]{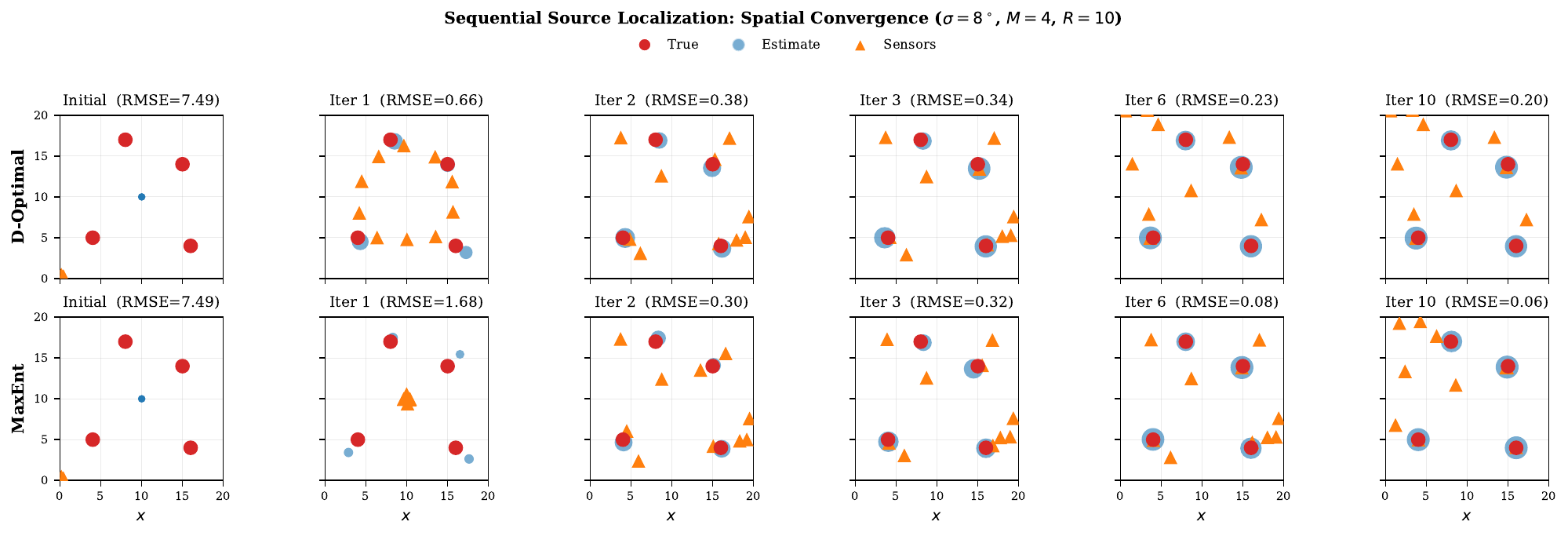}
\caption{Spatial evolution of posterior estimates at $\sigma = 8^\circ$ for $M = 4$ sources and $R = 10$ sensors (single trial). Top row: D-optimal; bottom row: MaxEnt. Red circles: true sources; blue circles: posterior mean estimates (size indicates confidence); orange triangles: sensor positions. In this trial, MaxEnt achieves RMSE $= 0.06$ versus D-optimal's $0.20$.}
\label{fig:filmstrip}
\end{figure}

\begin{table}[t]
\centering
\caption{Final RMSE (averaged over 20 Monte Carlo trials) for selected $(M, R)$ configurations. Percentage improvement of MaxEnt over D-optimal is shown in parentheses.}
\label{tab:fixed_results}
\small
\begin{tabular}{@{}cc cc cc@{}}
\toprule
& & \multicolumn{2}{c}{$\sigma = 15^\circ$} & \multicolumn{2}{c}{$\sigma = 8^\circ$} \\
\cmidrule(lr){3-4} \cmidrule(lr){5-6}
$M$ & $R$ & D-opt & MaxEnt & D-opt & MaxEnt \\
\midrule
3 & 3  & 0.506 & 0.449 {\scriptsize(+11\%)} & 0.391 & 0.343 {\scriptsize(+12\%)} \\
3 & 5  & 0.450 & 0.340 {\scriptsize(+25\%)} & 0.323 & 0.344 {\scriptsize($-$7\%)} \\
3 & 10 & 0.315 & 0.272 {\scriptsize(+14\%)} & 0.326 & 0.216 {\scriptsize(+34\%)} \\
5 & 5  & 0.429 & 0.378 {\scriptsize(+12\%)} & 0.338 & 0.294 {\scriptsize(+13\%)} \\
5 & 10 & 0.339 & 0.282 {\scriptsize(+17\%)} & 0.307 & 0.247 {\scriptsize(+19\%)} \\
8 & 8  & 0.426 & 0.340 {\scriptsize(+20\%)} & 0.346 & 0.273 {\scriptsize(+21\%)} \\
8 & 10 & 0.336 & 0.293 {\scriptsize(+13\%)} & 0.336 & 0.248 {\scriptsize(+26\%)} \\
10& 10 & 0.346 & 0.299 {\scriptsize(+14\%)} & 0.334 & 0.259 {\scriptsize(+22\%)} \\
\bottomrule
\end{tabular}
\end{table}

\subsection{Equivalent Sensor Count}

Panel~(d) of Figs.~\ref{fig:combined_15} and~\ref{fig:combined_8} quantifies the MaxEnt advantage in operational terms as an \emph{equivalent sensor saving}: for each source count~$M$, it reports the largest difference $R_f - R_m$ such that MaxEnt with $R_m$ sensors achieves RMSE no worse than D-optimal with $R_f$ sensors. At $\sigma = 15^\circ$, the saving ranges from 2 to 4 sensors (mean 2.8). At $\sigma = 8^\circ$, it ranges from 4 to 7 (mean 5.5). As a concrete example, at $\sigma = 8^\circ$ with $M = 6$ sources, MaxEnt using $R = 3$ sensors (RMSE $= 0.332$) outperforms D-optimal using $R = 10$ sensors (RMSE $= 0.343$), a saving of 7 sensors. These reductions are operationally significant when each additional sensor incurs hardware and deployment costs.

\section{Discussion}\label{sec:discussion}

The experimental results support a clear characterization of when MaxEnt reweighting helps and when it does not.

\emph{Operating regime.} The benefit is largest when $R > M$ and the posterior retains substantial spatial structure. In early iterations, when uncertainty is high, the reweighting provides the most informative guidance for placement. As the posterior concentrates over successive iterations, both methods converge to similar placements because uniform and reweighted distributions agree in the neighborhood of a point mass. The improvement is thus an \emph{early-iteration} effect that persists in the final RMSE because the posterior trajectory established in the first few iterations determines the quality of the converged estimate.

\emph{Role of measurement noise.} The $1/\sigma^2$ prefactor in the FIM scales every candidate placement equally, leaving $\log\det\bm{F}$ and its gradient $\tr(\bm{F}^{-1}\nabla\bm{F})$ unchanged up to constants; $\sigma$ therefore affects the improvement only through per-measurement information content. At moderate noise ($\sigma = 15^\circ$), measurements are weakly informative and a better placement contracts the posterior only modestly per step (7.9\% mean improvement). At low noise ($\sigma = 8^\circ$), each measurement contracts the posterior more strongly and the MaxEnt placement advantage compounds (17.2\%). At the extremes both methods converge: very high noise extracts little per measurement; very low noise collapses the posterior in a few iterations and both weightings agree.

\emph{Failure modes.} MaxEnt degrades performance in a small fraction of configurations, primarily when $R < M$. In this sensor-starved regime, the reweighting concentrates the FIM on a posterior subset, sacrificing the angular diversity needed for accurate localization. The worst case at $\sigma = 15^\circ$ is a 21\% degradation. When sensors are scarce relative to targets, uniform weighting is the safer choice.

\emph{Beyond Dirac targets and fixed budgets.} The Dirac-at-mean target and fixed budget $\varepsilon = 0.5$ are deliberate simplifications. As Section~\ref{sec:twolayer} notes, non-Dirac targets (Gaussian, fitted GMM, kernel density) lose the closed-form tilt~\eqref{eq:exptilt} and require either a numerical convex solver or a linear moment-matching surrogate. A budget schedule such as $\varepsilon_t = \varepsilon_0/\sqrt{t}$ tightens the reweighting as the posterior concentrates. The largest gains should arise when the posterior is persistently asymmetric or genuinely multi-modal, regimes the present setup does not exercise.

\emph{Alternative information criteria and baselines.} The log-determinant in Layer~2 can be swapped for the trace (A-optimality) or smallest eigenvalue (E-optimality)~\cite{Pukelsheim2006,Ucinski2005} without changing~\eqref{eq:bearing_fim} or $\bm{w}^{\mathrm{opt}}$. Mutual information~\cite{Krause2008} and Bayesian expected information gain~\cite{Chaloner1995,Ryan2016} integrate over the posterior rather than evaluating at one point; they are more expensive but capture nonlinearities the FIM does not. A systematic comparison against these baselines, with the same MaxEnt front-end applied to each, would be a useful next step.

\emph{Computational cost.} The computational overhead of Layer~1 is modest. The FIM computation in Layer~2 scales as $O(MNR)$ per gradient evaluation; with $M = 10$, $N = 1000$, $R = 10$, a single gradient requires approximately 0.5~ms under Numba JIT compilation. Layer~1 adds $O(NL)$ per source ($L = 50$ projections), since the point-mass target eliminates the sorting step of the general sliced Wasserstein computation. This cost is comparable to one FIM evaluation. The full sweep of 2560 configurations ($8 \times 8 \times 2$ noise levels $\times$ 20 seeds) completes in approximately 6 minutes on 8 CPU cores.

\emph{Convergence properties.} The individual components of Algorithm~\ref{alg:twolayer} have well-characterized convergence behavior, but the overall sequential procedure admits only partial guarantees. Layer~1 solves a convex problem: the KL divergence is strictly convex on the simplex and the constraint~\eqref{eq:gi} is linear in~$\bm{w}$, so the exponential tilt~\eqref{eq:exptilt} is the unique global minimizer, and bisection on~$\lambda$ converges at rate $O(2^{-k})$. Layer~2 is non-convex owing to the $1/\rho^4$ range dependence; L-BFGS-B converges to a stationary point, and the $K$-restart strategy reduces the probability of settling on a poor local optimum, but no global optimality guarantee is available. The Bayesian weight update~\eqref{eq:weight_update} is standard importance weighting, for which posterior consistency holds as $N \to \infty$ under regularity conditions satisfied by the bearing model~\cite{BarShalom2001}. The full $T$-iteration procedure is a myopic greedy policy: each iteration optimizes the current placement without lookahead over the remaining horizon. No formal guarantee ensures that this greedy strategy minimizes the $T$-step cumulative RMSE. However, the convergence plots in Figs.~\ref{fig:convergence_15} and~\ref{fig:convergence_8} provide empirical evidence that the procedure stabilizes in practice: both methods reach a stable RMSE plateau within 3--4 iterations, and the gap between MaxEnt and D-optimal persists thereafter without oscillation. A formal analysis of the multi-step regret relative to a non-myopic policy remains an open problem.

\emph{Limitations.} The most significant simplification is the assumption of known data association: each sensor knows which source produced each bearing measurement. Relaxing this assumption introduces a combinatorial assignment problem whose cost grows factorially with the number of sources, requiring integration with methods such as joint probabilistic data association (JPDA)~\cite{BarShalom2009JPDA}, multi-hypothesis tracking (MHT)~\cite{Reid1979}, or probability hypothesis density (PHD) filtering~\cite{Mahler2003}. Real bearing-only deployments are also subject to multipath propagation, clutter, model mismatch, and intermittent or asynchronous communications~\cite{Sinopoli2004}; the present synthetic validation does not exercise these effects. Finally, all experiments are restricted to two-dimensional source positions; the sliced Wasserstein formulation extends to higher dimensions, but the particle count and projection count must both scale to maintain approximation quality.

\section{Conclusion}\label{sec:conclusion}

This paper applied the maximum-entropy likelihood synthesis framework of \cite{Bhattacharya2026} to bearing-only multi-source localization, combining distributional reweighting with D-optimal sensor placement in a two-layer architecture. The structural decoupling between the two layers means the reweighting generalizes to any sensing modality; only the FIM expression in Layer~2 changes when bearings are replaced by range, time-of-arrival, or other measurements. Propositions~\ref{prop:uniqueness}--\ref{prop:sensitivity} establish uniqueness, monotone concentration toward the target, and a quantitative sensitivity bound relating the change in the D-optimal objective to the reweighting magnitude; together with the empirical sweep, they demarcate the regime in which Layer~1 is provably beneficial and the regime in which it provably reduces to classical D-optimal placement.

Beyond addressing the limitations noted above, natural extensions include sequential tracking of moving targets, joint integration with data-association methods (JPDA, MHT, PHD)~\cite{BarShalom2009JPDA,Reid1979,Mahler2003}, comparison against trace, minimum-eigenvalue, mutual-information, and Bayesian-EIG baselines, alternative non-Dirac targets (geometric median, fitted Gaussian or Gaussian mixture) for asymmetric and multi-modal posteriors, adaptive scheduling of the accuracy budget $\varepsilon_t$, and joint optimization of sensor noise precision alongside sensor position.


\end{document}